\theoremstyle{plain} 
\newtheorem{prop}{Proposition}
\theoremstyle{definition}
\newtheorem{remark}{Remark}
\def\bal#1\eal{\begin{align}#1\end{align}}
\newcommand{\bH}{{\bf H}}
\newcommand{\bY}{{\bf Y}}
\newcommand{\bI}{{\bf I}}
\newcommand{\bV}{{\bf V}}
\newcommand{\bg}{{\bf g}}
\newcommand{\bA}{{\bf A}}
\newcommand{\bQ}{{\bf Q}}
\newcommand{\bS}{{\bf S}}
\newcommand{\bq}{{\bf q}}
\newcommand{\bv}{{\bf v}}
\newcommand{\bX}{{\bf X}}
\newcommand{\bu}{{\bf u}}
\newcommand{\bx}{{\bf x}}
\newcommand{\ba}{{\bf a}}
\newcommand{\bn}{{\bf n}}
\newcommand{\bo}{{\bf 0}}
\newcommand{\bw}{{\bf w}}
\newcommand{\bh}{{\bf h}}
\newcommand{\bp} {\begin{proof}}
\newcommand{\ep} {\end{proof}}
\newcommand{{\Rb}} {\right)}
\newcommand{{\Rf}} {\right\}}
\newcommand{{\diag}} {\mathrm{diag}}
\begin{document}

\title{\LARGE{  Active Reconfigurable Intelligent Surface Aided Secure Transmission }}

\author{Limeng Dong, Hui-Ming Wang \emph{Senior Member, IEEE}, and Jiale Bai

\thanks{This work was in part finished when Limeng Dong was working in the  School of Information and Communications Engineering, and also with the Ministry of Education Key Laboratory for Intelligent Networks and Network Security, Xi'an Jiaotong University, Xi'an 710049, China. Limeng Dong is  with the School of Electronics and Information, Northwestern Polytechnical University, Xi'an 710129, China, email: dlm$\_$nwpu@hotmail.com.}

\thanks{Hui-Ming Wang and Jiale Bai are with the  School of Information and Communications Engineering, and also with the Ministry of Education Key Laboratory for Intelligent Networks and Network Security, Xi'an Jiaotong University, Xi'an 710049, China (e-mail: xjbswhm@gmail.com; bjl19970954@stu.xjtu.edu.cn.)}

}

\maketitle

\begin{abstract}

Reconfigurable Intelligent Surface (RIS) draws great attentions in academic and industry due to its passive and low power consumption nature, and has currently been used in physical layer security to enhance the secure transmission. However, due to the existence of ``double fading" effect on the reflecting channel link between transmitter and user,  RIS helps achieve limited secrecy performance gain compared with the case without RIS. In this correspondence, we propose a novel active RIS design to enhance the secure wireless transmission, where the reflecting elements in RIS  not only adjust the phase shift but also amplify the amplitude of signals. To solve the non-convex secrecy rate optimization based on this design, an efficient alternating optimization algorithm is proposed to jointly optimize the beamformer at transmitter and  reflecting coefficient matrix at RIS. Simulation results show that with the aid of active RIS design, the impact of ``double fading" effect can be effectively relieved, resulting in a significantly higher secrecy performance gain compared with existing solutions with passive RIS  and without RIS design.

\end{abstract}	

\begin{IEEEkeywords}
Reconfigurable intelligent surface (RIS), double fading, active RIS, secrecy rate, alternating optimization.
\end{IEEEkeywords}

\section{Introduction}

Reconfigurable intelligent surface (RIS),  is a  software-controlled metasurface equipped with low complexity passive reflecting elements. These elements are able to  induce certain phase shift  for the incident electromagnetic signal waves. And with proper phase shifts,  the quality of communications to user can be greatly enhanced, which is a very promising  technique for beyond 5G and 6G communications \cite{You-21}.

 Recently, RIS has been used in physical layer security to enhance the secure transmission in multi-antenna systems. 
By jointly optimizing the beamforming vector at transmitter and  phase shifts at RIS, the signals transmitted via the direct and reflecting channel link can be constructively (destructively) added at user (eavesdropper), resulting in a higher secrecy rate (SR) than the case without RIS. Secure RIS-assisted wireless communications in multi-input single-output (MISO) channel was  investigated first, and numerical algorithm was proposed to enhance the secrecy rate under  full channel state information (CSI) \cite{Cui-19}-\cite{Guan-19}.  In \cite{Dong-20d},  artificial noise (AN) approach is proposed to enhance the SR based on full and unknown CSI of eavesdropper. Later on, RIS was also  applied to  enhance the SR in cognitive radio systems, and different numerical solutions were proposed  under different cases of CSI condition \cite{Dong-21}. A double RIS assisted MISO system was studied in \cite{Dong-21b} and an iterative product Riemannian manifold algorithm is developed to optimize the phase shift of RIS.  Furthermore, RIS was also used to enhance the secure transmission in multi-input multi-output systems, and some efficient algorithms were proposed to maximize the SR \cite{Dong-20c}\cite{Dong-20} as well as AN-aided methods \cite{Hong-20}\cite{Chu-21}. All these works indicate that   RIS has effectively improved the SR compared with the existing solutions without RIS. 

However, all these aforementioned existing works have ignored an unavoidable problem  in RIS-assisted system: although RIS brings new reliable reflection link for signal transmission in addition to the direct link, a ``double fading" effect always exists in this reflection link, i.e., the signals received via this link suffer from  large-scale fading twice \cite{Vincent-21}. And if  the fading coefficient is  large, the signals from this longer reflection link lose more power than that from the short direct link, resulting in a limited performance gain compared with the one without RIS. To combat  this  ``double fading" effect, recently, a new  concept of active RIS has been proposed \cite{Vincent-21}-\cite{Xu-21}. The key feature of active RIS is that each reflecting element is equipped with a power amplifier so that the phase and the magnification of the signal can be simultaneously adjusted at the cost of  additional power supply. In \cite{Vincent-21} and \cite{Liang-21}\cite{Xu-21}, an active RIS assisted single-input multiple-output uplink system  and downlink MISO multi-user system were studied respectively, and numerical algorithms were developed to optimize the phase shift and amplification factor of active RIS. Based on the numerical examples, this active RIS design can realize a significantly higher transmission rate compared with those via passive RIS, thus overcoming the fundamental limit of “double fading” effect. 

However, all these current contributions \cite{Vincent-21}-\cite{Xu-21} of active RIS study  are restricted to non-secure setting, i.e., no eavesdropper is considered in the system model. Although active RIS greatly helps enhancing the quality of communications of user, it also reduces “double fading” effect in the reflection link of base station-RIS-eavesdropper due to the broadcast nature of wireless channels, resulting larger information leakage to eavesdropper. Hence, the current proposed algorithms  is not applicable for enhancing the SR in secure communication case, and the transmit coefficients at base station and RIS should be carefully re-designed via new efficient algorithms.

Against these aforementioned backgrounds, in this correspondence,  we propose a novel active RIS-assisted design to enhance the secure wireless transmission in multi-antenna systems. To the best of our knowledge, this is the first work that applies  active RIS  in physical layer security. Specifically, we design a new signal model based on this active RIS assisted system and formulate the SR optimization problem. To solve this non-convex problem, we develop an efficient alternating optimization  (AO) algorithm to jointly optimize the  beamforming vector at  transmitter and reflecting matrix at  RIS. Simulation results show that  our proposed algorithm with active RIS design  not only effectively relieve the “double fading” effect of the reflecting link from transmitter to legitimate user, and  also boost the SR  compared with the existing solutions for passive RIS and no RIS design.

\emph{Notations}:
 $\bA^{T}$, $\bA^{*}$ and $\bA^{H}$ denote transpose, conjugate and  Hermitian conjugate of $\bA$ respectively;    $tr(\bA)$ is the trace of $\bA$;  $\bu_{max}(\bA)$ denotes the eigenvector corresponding to the largest eigenvalue of $\bA$;  $|\ba|$ denotes the Euclidean norm of $\ba$; $|\bA|_F$ denotes the F-norm of $\bA$; $\bA[i,j]$ denotes the entry in the $i$-th row and $j$-th column; $\ba[i]$ denotes the $i$-th entry in $\ba$; $arg(a)$ is the phase of complex variable $a$; $\bu_{max}(\bA)$ denotes the eigenvector corresponding to the largest eigenvalue of $\bA$, and $\lambda_{max}(\bA)$ is the largest eigenvalue of $\bA$; $\ba[1:n]$ is to extract the first $n$ entires in $\ba$; $diag(\ba)$ is to transform $\ba$ to a diagonal matrix with diagonal elements in $\ba$.

\section{Channel Model And Problem Formulation}

Let us consider an RIS-assisted multi-antenna wiretap system shown in Fig.1, in which an RIS,  a transmitter (Alice), receiver (Bob) as
well as an eavesdropper (Eve) are included. Alice is equipped with $m$ antennas, RIS is equipped with $n$ reflecting
elements, and both Bob and Eve are equipped with a single antenna respectively. Identically with the settings in \cite{Vincent-21}-\cite{Xu-21}, the RIS is in active mode and is powered by an independent energy source. Furthermore, we remark that the proposed active RIS assisted system is significantly different from the traditional amplify-and-forward (AF) relay-assisted system. The active RIS is not equipped with radio frequency (RF) chains, which are used for baseband signal processing. Hence, identical  with passive RIS, the active RIS does not have the ability for signal storage but only reflects the incident signals so that the signals from both the  reflection and direct links  are received at user simultaneously in a single time phase. While for AF relay system, it generally needs power-consuming RF chains  to store the signal first and then transmit it to the user. Therefore, the signals transmitted via the link of base station-user and  base station-relay-user are received at user in two different time phases. Hence, the signal model between the active RIS and relay assisted system are also different, which cannot apply the same numerical solution to maximize the SR. Note that this key diffierence also has been mentioned in the current literature for the study of active RIS assisted non-secure system \cite{Vincent-21}-\cite{Xu-21}.

Based on this setting, the signals received at Bob $y_{B}$ and at Eve $y_{E}$ are expressed as
\bal
\notag
&y_B=(\bh_{AB}+\bh_{IB}\bQ\bH_{AI})\bx+\bh_{IB}\bQ\bn_I+n_B,\\
\notag
&y_E=(\bh_{AE}+\bh_{IE}\bQ\bH_{AI})\bx+\bh_{IE}\bQ\bn_I+n_E,
\eal
respectively, where $\bh_{AB}\in\mathbb{C}^{1\times m}$, $\bh_{IB}\in\mathbb{C}^{1\times m}$, $\bH_{AI}\in\mathbb{C}^{n\times m}$, $\bh_{AE}\in\mathbb{C}^{1\times m}$, $\bh_{IE}\in\mathbb{C}^{1\times n}$ denote the channel links of Alice-Bob, RIS-Bob, Alice-RIS, Alice-Eve, RIS-Eve respectively, $\bx=\bw s$ is the transmitted signal, $s$ is the confidential message following the distribution of zero mean and unit variance, $\bw$ is the beamformer, $\bQ\in\mathbb{C}^{n\times n}$ is the diagonal reflecting coefficient matrix at RIS, in which  $|\bQ[i,i]|$ and $arg(\bQ[i,i])$ represent the amplification factor and phase shift coefficient respectively at the $i$-th reflecting element,  $n_{B}\sim\mathcal{CN}(0,\sigma ^2_B)$ and $n_{E}\sim\mathcal{CN}(0,\sigma ^2_E)$  represent complex noise at  Bob and Eve respectively, $\sigma ^2_B$, $\sigma ^2_E$ denote the noise power. 

In addition, we remark that since RIS is equipped with  amplifiers on the reflecting elements, the thermal noise $\bn_I\sim\mathcal{CN}(0,\sigma ^2_I\bI)$ generated at RIS cannot be ignored, which is significantly different from the passive RIS case \cite{Vincent-21}-\cite{Xu-21}. We assume that full channel state information (CSI) of all the channel links
are available at Alice,  this can be realized since Eve is just
another user in the system and it also share its CSI with Alice
but is untrusted by Bob \cite{Cui-19}-\cite{Dong-21b}. The study based on full CSI also can be served as a theoretical system performance benchmark.  For how to estimate the CSI of all the channel links, we consider a  quasi-static block fading channel, and focus on one particular fading block  over which all the channels remain approximately constant. Then,  we apply time division scheme  \cite{Wang-19} to estimate the direct and reflection link channels in 4 independent time slot. 
In the first and second time slot, we shut down the RIS and estimate the channel $\bh_{AB}$ and $\bh_{AE}$ respectively. In the third and fourth time slots,  the RIS is in operating mode but without phase shift and amplification. Then the reflecting channels $\bH_{AI}$ and $\bh_{IB}$ are estimated in the third time slot, and the channel $\bh_{IE}$ is estimated in the fourth time slot.

Therefore, for $i=1,2,...,n$,  the SR optimization problem for this channel model is formulated as
\bal
\notag
&(P1)\quad\quad \underset{\bw,  \bQ}{\max} \ \ln \left( 1+\frac{|(\bh_{AB}+\bh_{IB}\bQ\bH_{AI})\bw|^2}{\sigma_B^2+|\bh_{IB}\bQ|^2\sigma^2_I} \right)\\
\notag
  &\quad\quad\quad\quad\quad-\ln\left(1+\frac{|(\bh_{AE}+\bh_{IE}\bQ\bH_{AI})\bw|^2}{\sigma_E^2+|\bh_{IE}\bQ|^2\sigma^2_I}\right),\\
\notag
&s.t.\  |\bw|^2\leq P_T, |\bQ\bH_{AI}\bw|^2+|\bQ|^2_F\sigma^2_I\leq P_I, \quad |\bQ[i,i]|\leq \eta_i
\eal
where $P_T$ is the maximum transmit power budget at Alice, $P_I$ is the maximum amplification power budget at RIS, $\eta_i>1$ is the  maximum  amplification factor  at the $i$-th reflecting element. 

\begin{figure}[t]
	\centerline{\includegraphics[width=2.2in]{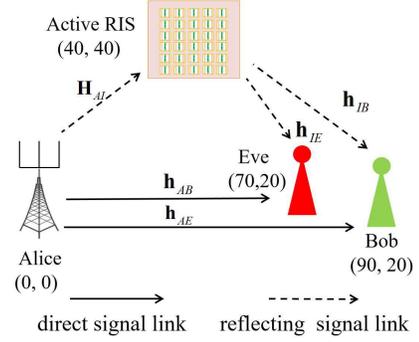}}
	\caption{An RIS-assisted multi-antenna wiretap system.}
\end{figure}

\begin{remark}
It is noted that for passive RIS, $\bQ$ should satisfy the unit modulus constraint $|\bQ[i,i]|=1$, i.e., the amplitude of the signal cannot be changed. But for the active RIS, the  amplification of the signal varies from 0 to $\eta$. Also note that the objective function becomes complicated and the difficulty for optimizing this problem increases due to the existence of the extra noise terms $|\bh_{IB}\bQ|^2\sigma_I^2$ and $|\bh_{IE}\bQ|^2\sigma_I^2$ in the objective function  compared with the simpler one for passive RIS.
\end{remark}

\section{AO Algorithm for SR Maximization}

To solve $P1$, in this section, we propose AO algorithm
to optimize $\bw$ and $\bQ$ in two sub-problems. When $\bQ$ is fixed,  semi-definite relaxation (SDR) in combination with Charnes-Cooper transformation (CCT) algorithm  is proposed to optimize $\bw$ globally. When $\bw$ is fixed, we apply SDR in combination with minorization-maximization (MM) algorithm to obtain a first-order optimal solution of $\bQ$. Furthermore, to recover the optimal rank-1 solution in the SDR, we propose a penalty based  algorithm to achieve the rank-1 solution globally. Finally, as the convergence of AO algorithm is reached, a Karush-Kuhn-Tucker (KKT) solution of $P1$ can be obtained.

\subsection{Algorithm for optimizing $\bw$ given $\bQ$}

Firstly, when $\bQ$ is fixed, the sub-problem of optimizing $\bw$ can be expressed as
\bal
\notag
(P2)\quad\quad  &\underset{\bw}{\max}\ (1+|\tilde{\bh}_B\bw|^2)(1+|\tilde{\bh}_E\bw|^2)^{-1}\\
\notag
&s.t.\quad|\bw|^2\leq P_T, |\bQ\bH_{AI}\bw|^2\leq \tilde{P}_I
\eal
where the log is omitted due to its monotonicity, $\tilde{P}_I=P_I-|\bQ|^2_F\sigma^2_I$  and where
 $\tilde{\bh}_B=\bh_B/\sqrt{\sigma_B^2+|\bh_{IB}\bQ|^2\sigma^2_I}, 
\tilde{\bh}_E=\bh_E/\sqrt{\sigma_E^2+|\bh_{IE}\bQ|^2\sigma^2_I},
\bh_B=\bh_{AB}+\bh_{IB}\bQ\bH_{AI}$, and $\bh_E=\bh_{AE}+\bh_{IE}\bQ\bH_{AI}$.
 To solve this non-convex problem, we use the key idea of SDR technique and transform the problem to a relaxed one
\bal
\notag
(P3)\quad  &\underset{\bS}{\max}\ (1+\tilde{\bh}_B\bS\tilde{\bh}_B^H)(1+\tilde{\bh}_E\bS\tilde{\bh}_E^H)^{-1}\\
\notag
&s.t.\ \bS\geq\bo, tr(\bS)\leq P_T, tr(\bQ\bH_{AI}\bS\bH_{AI}^H\bQ^H)\leq \tilde{P}_I
\eal
where $\bS=\bw\bw^H$ and  the constraint $rank(\bS)=1$ is omitted. $P3$ is a quasiconcave problem with convex constraints. Let $\tilde{\bS}=t\bS$,  and $t=1/(1+\tilde{\bh}_E\bS\tilde{\bh}_E^H)$. By applying CCT method, $P3$ can be  equivalently transformed to
\bal
\notag
&(P4)\quad  \underset{\tilde{\bS}, t}{\max}\ t+\tilde{\bh}_B\tilde{\bS}\tilde{\bh}_B^H\\
\label{eq1}
 &s.t.\quad tr(\tilde{\bS})\leq tP_T, tr(\bQ\bH_{AI}\tilde{\bS}\bH_{AI}^H\bQ^H)\leq t\tilde{P}_I,\\
\label{eq2}
 &\quad\quad\quad\tilde{\bh}_E\tilde{\bS}\tilde{\bh}_E^H+t=1, t\geq0, \tilde{\bS}\geq\bo,
\eal 
which can be directly optimized via CVX solver. Since the rank-1
constraint of $\tilde{\bS}$ is omitted here, the solution may not be of
rank-1 so that the original $\bw$ cannot be recovered. Therefore, we propose a penalty based approach to recover the optimal rank-1 solution. Specifically, the constraint $rank(\tilde{\bS})=1$ can be equivalently expressed as $tr(\tilde{\bS})-\lambda_{max}(\tilde{\bS})\leq 0$, and this holds for any Hermitian $\tilde{\bS}$. Then we construct the following problem 
\bal
\notag
(P5)\ \underset{\tilde{\bS}, t}{\min}\ -(t+\tilde{\bh}_B\tilde{\bS}\tilde{\bh}_B^H)+p(tr(\tilde{\bS})-\lambda_{max}(\tilde{\bS}))\ s.t. \eqref{eq1}, \eqref{eq2}
\eal
where $p>0$ is a penalty weight with large enough value to achieve the small value of $tr(\tilde{\bS})-\lambda_{max}(\tilde{\bS})$. Using the key idea of proof in \cite{Phan-12}, one obtains that there must exist $0<p_0<+\infty$ such that for any $p>p_0$, $P4$ and $P5$ share the same optimal solution as well as the optimal value. Since the objective function in $P5$ is concave, $P5$ is a class of concave programming problem. Furthermore, $\lambda_{max}(\tilde{\bS})$ is not differentiable, so we can apply the subgradient of $\lambda_{max}(\tilde{\bS})$ as $\bu_{max}(\tilde{\bS})\bu_{max}(\tilde{\bS})^H$. Hence, given a feasible solution $\tilde{\bS}^{(k)}$ for $P5$ in the $k$-th iteration, the following problem gives an improved solution of $P5$:
\bal
\notag
&(P6)\quad \underset{\tilde{\bS}, t}{\min}\ -(t+\tilde{\bh}_B\tilde{\bS}\tilde{\bh}_B^H)\\
\notag
&+p(tr(\tilde{\bS})-\bu_{max}(\tilde{\bS}^{(k)})^H\tilde{\bS}\bu_{max}(\tilde{\bS}^{(k)}))\quad s.t.\ \eqref{eq1}, \eqref{eq2}.
\eal
We summarize the penalty based algorithm for recovering rank-1 solution of $P4$ as Algorithm 1. In this algorithm, $|\tilde{\bS}^{(k+1)}-\tilde{\bS}^{(k)}|_F\leq\epsilon_2$  means that there is no improved solution, then $p$ will be updated. Once an improved solution is found, $\tilde{\bS}^{(k+1)}$ will be set as the new initial point for the next iteration. If $|tr(\tilde{\bS}^{(k)})-\lambda_{max}(\tilde{\bS}^{(k)})|$ is below the target accuracy $\epsilon_1$, then  the algorithm terminates and the optimal rank-1 of $\bS$ is obtained.

\begin{algorithm}[h]
	\caption{(\it Penalty based algorithm for recovering rank-1 solution in $P4$)}
	\begin{algorithmic}
    \Require  $\epsilon_1, \epsilon_2>0$,  $p=10$,  feasible point $\tilde{\bS}^{(0)}$, $t^{(0)}$ for $P4$.
        \State 1. Set $k=0$. 
\Repeat \  \ \ 
           \State 2. If $|tr(\tilde{\bS}^{(k)})-\lambda_{max}(\tilde{\bS}^{(k)})|\leq\epsilon_1$, stop algorithm and go to step 5, else go to step 3.
           \State 3. Optimize $\tilde{\bS}^{(k+1)}$, $t^{(k+1)}$ in $P6$ given $\bu_{max}(\tilde{\bS}^{(k)})$.                                                                                                                                                                                                                        
     \State 4. If $|\tilde{\bS}^{(k+1)}-\tilde{\bS}^{(k)}|_F\leq\epsilon_2$, set $p=2p$, go to step 3, else set $k=k+1$, $\tilde{\bS}^{(k)}=\tilde{\bS}^{(k+1)}$ go to step 2.
 \Until  $|tr(\tilde{\bS}^{(k)})-\lambda_{max}(\tilde{\bS}^{(k)})|\leq\epsilon_1$
\State 5.  Output $\bS^{(k)}=\tilde{\bS}^{(k)}/t^{(k)}$ as the optimal rank-1 solution for $P3$.
	\end{algorithmic}
\end{algorithm}

\begin{prop}
Denote $f(t,\tilde{\bS})$ as the objective function of $P5$, and consider that $(t^{(k)},\tilde{\bS}^{(k)})$ is the optimal solution of $P6$ given  fixed $p$,  $f(t^{(k+1)},\tilde{\bS}^{(k+1)})\leq f(t^{(k)},\tilde{\bS}^{(k)})$, i.e., the convergence of the sequence $f(t^{(k)},\tilde{\bS}^{(k)})$, $k=0,1,...$, is non-increasing.
\end{prop}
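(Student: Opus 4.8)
The plan is to recognize Algorithm~1 as a majorization--minimization scheme and run the standard sandwich argument. Write the objective of $P5$ as $f(t,\tilde{\bS})=-(t+\tilde{\bh}_B\tilde{\bS}\tilde{\bh}_B^H)+p\,(tr(\tilde{\bS})-\lambda_{max}(\tilde{\bS}))$ and the objective of $P6$ formed at the iterate $\tilde{\bS}^{(k)}$ as
\bal
\notag
g_k(t,\tilde{\bS})=-(t+\tilde{\bh}_B\tilde{\bS}\tilde{\bh}_B^H)+p\,\big(tr(\tilde{\bS})-\bu_{max}(\tilde{\bS}^{(k)})^H\tilde{\bS}\,\bu_{max}(\tilde{\bS}^{(k)})\big).
\eal
Since $-\lambda_{max}(\cdot)$ is concave, $f$ is a concave function, which is exactly why $P5$ cannot be minimized directly; by contrast $g_k$ is affine in $(t,\tilde{\bS})$, so $P6$ is a convex (indeed semidefinite) program over the \emph{same} feasible region \eqref{eq1}--\eqref{eq2}, and in particular the previous iterate $(t^{(k)},\tilde{\bS}^{(k)})$ is always feasible for $P6$.

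Next I would verify the two defining properties of the surrogate $g_k$. Using the variational characterization $\lambda_{max}(\tilde{\bS})=\max_{|\bv|=1}\bv^H\tilde{\bS}\bv$, for every Hermitian positive semidefinite $\tilde{\bS}$ one has $\lambda_{max}(\tilde{\bS})\ge\bu_{max}(\tilde{\bS}^{(k)})^H\tilde{\bS}\,\bu_{max}(\tilde{\bS}^{(k)})$, hence $g_k(t,\tilde{\bS})\ge f(t,\tilde{\bS})$ for all feasible $(t,\tilde{\bS})$ (global upper bound). Evaluating at $\tilde{\bS}=\tilde{\bS}^{(k)}$ the inequality becomes an equality because $\bu_{max}(\tilde{\bS}^{(k)})^H\tilde{\bS}^{(k)}\bu_{max}(\tilde{\bS}^{(k)})=\lambda_{max}(\tilde{\bS}^{(k)})$, so $g_k(t^{(k)},\tilde{\bS}^{(k)})=f(t^{(k)},\tilde{\bS}^{(k)})$ (tightness at the expansion point).

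The monotonicity then follows from the chain
\bal
\notag
f(t^{(k+1)},\tilde{\bS}^{(k+1)})\le g_k(t^{(k+1)},\tilde{\bS}^{(k+1)})\le g_k(t^{(k)},\tilde{\bS}^{(k)})=f(t^{(k)},\tilde{\bS}^{(k)}),
\eal
where the first inequality is the global upper bound, the second holds because $(t^{(k+1)},\tilde{\bS}^{(k+1)})$ minimizes $g_k$ over a feasible set that contains $(t^{(k)},\tilde{\bS}^{(k)})$, and the last equality is the tightness just established. This is precisely $f(t^{(k+1)},\tilde{\bS}^{(k+1)})\le f(t^{(k)},\tilde{\bS}^{(k)})$. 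To upgrade ``non-increasing'' to genuine convergence I would add that the feasible set \eqref{eq1}--\eqref{eq2} is compact ($t\le 1$ from \eqref{eq2}, together with $tr(\tilde{\bS})\le tP_T$ and $\tilde{\bS}\ge\bo$ bounds $\tilde{\bS}$, and all constraints are closed), so the continuous function $f$ is bounded below on it and the monotone sequence converges.

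The argument is essentially routine; the only point that needs a little care is the majorization step---checking that substituting $\bu_{max}(\tilde{\bS}^{(k)})^H\tilde{\bS}\,\bu_{max}(\tilde{\bS}^{(k)})$ for $\lambda_{max}(\tilde{\bS})$ indeed yields a global overestimate of $f$ that is exact at $\tilde{\bS}^{(k)}$. This rests on $\lambda_{max}$ being convex with $\bu_{max}(\tilde{\bS}^{(k)})\bu_{max}(\tilde{\bS}^{(k)})^H$ a subgradient at $\tilde{\bS}^{(k)}$, and it remains valid when the largest eigenvalue of $\tilde{\bS}^{(k)}$ is not simple, since any unit top eigenvector may be used. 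It is also worth stressing that the displayed chain is carried out for a \emph{fixed} $p$, which matches the hypothesis of the proposition; the doubling of $p$ in Algorithm~1 merely restarts this monotone process with a larger penalty weight.
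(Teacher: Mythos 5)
Your proposal is correct and follows essentially the same route as the paper: both are the standard majorization sandwich $f(t^{(k+1)},\tilde{\bS}^{(k+1)})\leq g_k(t^{(k+1)},\tilde{\bS}^{(k+1)})\leq g_k(t^{(k)},\tilde{\bS}^{(k)})=f(t^{(k)},\tilde{\bS}^{(k)})$, using optimality of $(t^{(k+1)},\tilde{\bS}^{(k+1)})$ for $P6$ over the common feasible set \eqref{eq1}--\eqref{eq2} together with the fact that the linearization of $\lambda_{max}$ at $\tilde{\bS}^{(k)}$ underestimates $\lambda_{max}(\tilde{\bS})$ and is tight at the expansion point. The only cosmetic difference is that you justify the majorization via the Rayleigh-quotient characterization of $\lambda_{max}$, whereas the paper invokes the subgradient inequality with subgradient $\bu_{max}(\tilde{\bS}^{(k)})\bu_{max}(\tilde{\bS}^{(k)})^H$ — these are equivalent here.
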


\begin{proof}
Firstly, note that $P6$ is also equivalent to the following problem:
\bal
\notag
&(P6') \quad \underset{\tilde{\bS}, t}{\min}\ -(t+\tilde{\bh}_B\tilde{\bS}\tilde{\bh}_B^H)+p(tr(\tilde{\bS})-\lambda_{max}(\tilde{\bS}^{(k)})\\
\notag
&-\bu_{max}(\tilde{\bS}^{(k)})^H(\tilde{\bS}-\tilde{\bS}^{(k)})\bu_{max}(\tilde{\bS}^{(k)}))\ s.t.\ \eqref{eq1}, \eqref{eq2}.
\eal
Then, consider that $(t^{(k+1)},\tilde{\bS}^{(k+1)})$ is the optimal solution of $P6'$, since $(t^{(k)},\tilde{\bS}^{(k)})$ is also feasible for $P6'$, one obtains that
$-(t^{(k+1)}+\tilde{\bh}_B\tilde{\bS}^{(k+1)}\tilde{\bh}_B^H)+p(tr(\tilde{\bS}^{(k+1)})-\lambda_{max}(\tilde{\bS}^{(k)})-\bu_{max}(\tilde{\bS}^{(k)})^H(\tilde{\bS}^{(k+1)}-\tilde{\bS}^{(k)})\bu_{max}(\tilde{\bS}^{(k)}))\leq-(t^{(k)}+\tilde{\bh}_B\tilde{\bS}^{(k)}\tilde{\bh}_B^H)+p(tr(\tilde{\bS}^{(k)})-\lambda_{max}(\tilde{\bS}^{(k)}))$.
Secondly,  the sub-gradient of $\lambda(\tilde{\bS}^{(k)})$ is $\partial\lambda(\tilde{\bS}^{(k)})=\bu_{max}(\tilde{\bS}^{(k)})\bu_{max}(\tilde{\bS}^{(k)})^H$, and $\forall \bY\geq\bo$, $\lambda_{max}(\bY)-\lambda_{max}(\bX)\geq  \bu_{max}(\tilde{\bS}^{(k)})^H(\bY-\bX)\bu_{max}(\tilde{\bS}^{(k)})$ holds \cite{J-96}. Therefore,  we have
\bal
\notag
&f(t^{(k+1)},\tilde{\bS}^{(k+1)})
\\
\notag
=&-(t^{(k+1)}+\tilde{\bh}_B\tilde{\bS}^{(k+1)}\tilde{\bh}_B^H)+p(tr(\tilde{\bS}^{(k+1)})-\lambda_{max}(\tilde{\bS}^{(k+1)})\\
\notag
&-\lambda_{max}(\tilde{\bS}^{(k)})+\lambda_{max}(\tilde{\bS}^{(k)}))\\
\notag
\leq&-(t^{(k+1)}+\tilde{\bh}_B\tilde{\bS}^{(k+1)}\tilde{\bh}_B^H)+p(tr(\tilde{\bS}^{(k+1)})-\lambda_{max}(\tilde{\bS}^{(k)})\\
\notag
&-\bu_{max}(\tilde{\bS}^{(k)})^H(\tilde{\bS}^{(k+1)}-\tilde{\bS}^{(k)})\bu_{max}(\tilde{\bS}^{(k)}))\\
\notag
\leq&-(t^{(k)}+\tilde{\bh}_B\tilde{\bS}^{(k)}\tilde{\bh}_B^H)+p(tr(\tilde{\bS}^{(k)})-\lambda_{max}(\tilde{\bS}^{(k)}))\\
\notag
=&f(t^{(k)},\tilde{\bS}^{(k)}),
\eal
from which the proof is complete.
\end{proof}

Proposition 1 indicates that Algorithm 1 is guaranteed to achieve a monotonic convergence, and after convergence, an optimal rank-1 solution of $P4$ can be achieved \cite{Phan-12}. Once the rank-1 $\bS$ is obtained, then the optimal beamformer of $P2$ can be expressed as $\bw=\bu_{max}(\bS)\sqrt{\lambda_{max}(\bS)}$.

\subsection{Algorithm for optimizing $\bQ$ given $\bw$}

The next step is to optimize $\bQ$ at RIS given $\bw$. To make the problem more tractable, let $\bv=\bq^*\in\mathbb{C}^{n\times 1}$,  where the entries in $\bq$ are all the diagonal elements in $\bQ$, then the  sub-problem of optimizing $\bQ$ is equivalently expressed as
\bal
\notag
&(P7)\quad\quad  \underset{\bv}{\max}\ C_B-C_E\\
\notag
&s.t.\  |\bv^Hdiag(\bH_{AI}\bw)|^2+|\bv|^2\sigma^2_I\leq P_I, \quad |\bv[i]|\leq \eta_i
\eal
where for $j\in\{B,E\}$, 
$C_j=\ln(\sigma^2_j+|\bv^Hdiag(\bh_{Ij})|^2\sigma^2_I+|(\bh_{Aj}+\bv^H\bH_j)\bw|^2)-\ln(\sigma^2_j+|\bv^Hdiag(\bh_{Ij})|^2\sigma^2_I), \bH_j=diag(\bh_{Ij})\bH_{AI}$. 
To solve this non-convex problem, SDR method is applied again. Letting
\bal
\notag
 \bV=\begin{bmatrix}
\bv\\ 1
\end{bmatrix}\begin{bmatrix}
\bv^H&1 
\end{bmatrix}, 
\eal
and $P7$ is relaxed as 
\bal
\notag
&(P8)\quad \underset{\bV}{\max}\  C(\bV)=\bar{C}_B-\bar{C}_E\\ 
\label{eq4}
&s.t.\ tr(\bH_A\bV)\leq P_I, \bV[i,i]\leq\eta_i^2,\forall i,\\
\label{eq5}
&\quad\quad \bV[n+1,n+1]=1, \bV\geq\bo,
\eal
where the rank-1 constraint is omitted, $\bar{C}_j=\ln(tr(\bH_{Aj}\bV))-\ln(tr(\bH_{Ij}\bV))$, $\tau_j=\sigma_j^2+\bh_{Aj}\bw\bw^H\bh_{Aj}^H$ and 
\bal
\notag
&\bH_A=\begin{bmatrix}
        diag(\bH_{AI}\bw)diag(\bH_{AI}\bw)^H+\sigma_I^2\bI   &  \bo_{n\times 1}  \\ 
       \bo_{1\times n}    &  0
\end{bmatrix},\\
\notag
&\bH_{Aj}=\begin{bmatrix}
\bH_j\bw\bw^H\bH_j^H+\bar{\bH}_j     & \bH_j\bw\bw^H\bh_{Aj}^H\\ 
 \bh_{Aj}\bw\bw^H\bH_j^H& \tau_j
\end{bmatrix},\\
\notag
&\bH_{Ij}= \begin{bmatrix}
\bar{\bH}_j & \bo_{n\times 1}\\ 
 \bo_{1\times n}& \sigma_j^2
\end{bmatrix}, \bar{\bH}_j=\sigma_I^2diag(\bh_{Ij})diag(\bh_{Ij})^H.
\eal
$P8$ is still non-convex due to the complicated non-convex objective function. To solve this problem, our key idea is to firstly relax $P8$ by approximating the objective function, and then propose MM algorithm to iteratively optimize the relaxed problem. During each iteration of MM, we apply the aforementioned penalty based Algorithm 1 again to recover the rank-1 solution.

\begin{prop}
Given fixed feasible point $\tilde{\bV}$, the  function $C(\bV)$ in $P8$ can be lower bounded as
\bal
\notag
C(\bV)\geq&\ln(tr(\bH_{AB}\bV))+\ln(tr(\bH_{IE}\bV))-\ln(tr(\bH_{IB}\tilde{\bV}))\\
\notag
&-tr(\bH_{IB}/tr(\bH_{IB}\tilde{\bV})(\bV-\tilde{\bV}))-\ln(tr(\bH_{AE}\tilde{\bV}))\\
\notag
&-tr(\bH_{AE}/tr(\bH_{AE}\tilde{\bV})(\bV-\tilde{\bV}))=\tilde{C}(\bV;\tilde{\bV}),                
\eal

and $\tilde{C}(\bV;\tilde{\bV})$ is a surrogate function.
\end{prop}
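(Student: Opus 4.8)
The plan is to substitute the definitions of $\bar{C}_B$ and $\bar{C}_E$ so as to write
\[
C(\bV)=\ln(tr(\bH_{AB}\bV))+\ln(tr(\bH_{IE}\bV))-\ln(tr(\bH_{IB}\bV))-\ln(tr(\bH_{AE}\bV)),
\]
and then handle the four summands separately. Each of the matrices $\bH_{AB},\bH_{IE},\bH_{IB},\bH_{AE}$ is Hermitian positive semidefinite by construction, so on the feasible set $\bV\geq\bo$ every trace $tr(\bH_{AB}\bV),tr(\bH_{IE}\bV),tr(\bH_{IB}\bV),tr(\bH_{AE}\bV)$ is a nonnegative affine function of $\bV$; composing with the concave increasing map $\ln(\cdot)$ shows $\ln(tr(\bH_{AB}\bV))$ and $\ln(tr(\bH_{IE}\bV))$ are concave in $\bV$, so I would leave these two terms untouched.

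Next I would linearize the two convex terms $-\ln(tr(\bH_{IB}\bV))$ and $-\ln(tr(\bH_{AE}\bV))$. I would use the tangent-plane upper bound for a differentiable concave function $g$ on the space of Hermitian matrices, $g(\bV)\leq g(\tilde{\bV})+tr(\nabla g(\tilde{\bV})(\bV-\tilde{\bV}))$, applied to $g(\bV)=\ln(tr(\bH_{IB}\bV))$, whose gradient (under the Hermitian inner product $\langle\bA,\bB\rangle=tr(\bA^H\bB)$) is $\nabla g(\tilde{\bV})=\bH_{IB}/tr(\bH_{IB}\tilde{\bV})$; negating yields
\[
-\ln(tr(\bH_{IB}\bV))\geq-\ln(tr(\bH_{IB}\tilde{\bV}))-tr\!\left(\frac{\bH_{IB}}{tr(\bH_{IB}\tilde{\bV})}(\bV-\tilde{\bV})\right),
\]
and the identical step with $\bH_{AE}$ in place of $\bH_{IB}$. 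Adding these two inequalities to the two untouched concave terms reproduces $\tilde{C}(\bV;\tilde{\bV})$ exactly and establishes $C(\bV)\geq\tilde{C}(\bV;\tilde{\bV})$.

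Finally I would verify that $\tilde{C}(\bV;\tilde{\bV})$ is a legitimate surrogate for the MM step: setting $\bV=\tilde{\bV}$ annihilates both linear correction terms, giving $\tilde{C}(\tilde{\bV};\tilde{\bV})=C(\tilde{\bV})$ (tightness at the anchor point); differentiating shows $\nabla_{\bV}\tilde{C}(\bV;\tilde{\bV})|_{\bV=\tilde{\bV}}=\nabla_{\bV}C(\bV)|_{\bV=\tilde{\bV}}$ (gradient consistency); and $\tilde{C}(\cdot;\tilde{\bV})$ is concave, being the sum of two concave log-trace terms and affine terms, so the relaxed problem with $\tilde{C}$ as objective is a concave program in $\bV$. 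The step that needs genuine care rather than mechanical manipulation is the matrix-gradient/tangent-inequality argument on the positive semidefinite cone — in particular deriving $\nabla_{\bV}\ln(tr(\bH\bV))=\bH/tr(\bH\bV)$ correctly in the complex Hermitian setting and invoking concavity of $\ln$ composed with a nonnegative affine map; the remainder is bookkeeping.
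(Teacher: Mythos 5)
Your proposal is correct and follows essentially the same route as the paper: write $C(\bV)$ as a sum of log-trace terms, note that $\ln(tr(\bH_{IB}\bV))$ and $\ln(tr(\bH_{AE}\bV))$ are concave (log of a nonnegative linear map), apply the first-order tangent upper bound $g(\bV)\leq g(\tilde{\bV})+tr(\nabla g(\tilde{\bV})(\bV-\tilde{\bV}))$ with $\nabla g(\tilde{\bV})=\bH/tr(\bH\tilde{\bV})$, and then check the standard surrogate conditions (tightness at $\tilde{\bV}$, gradient consistency, and lower-bound property). The only cosmetic difference is that you verify concavity of the surrogate where the paper cites continuity as its fourth condition, which does not change the argument.
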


\begin{proof}
Proposition 2 follows from the first-order Taylor expansion theorem \cite{Wang-17}\cite{Sun-17}: for any concave function $f(\bx)$ with $\bx, \tilde{\bx} \in \bf dom $ $f$,
$f(\bx)\leq f(\tilde{\bx})+(\nabla f(\tilde{\bx}))^T(\bx-\tilde{\bx})$, 
and for any concave function $g(\bX)$ with $\bX, \tilde{\bX} \in \bf dom $ $g$,
$g(\bX)\leq g(\tilde{\bX})+tr(\nabla g(\tilde{\bX})(\bX-\tilde{\bX}))$. 
Hence, since $tr(\bH_{IB}\bV)$ and $tr(\bH_{AE}\bV)$ are both linear respect to $\bV$, $\ln(tr(\bH_{IB}\bV))$ and $\ln(tr(\bH_{AE}\bV))$ are both concave respect to $\bV$.
Then, given feasible $\tilde{\bV}$, $\ln(tr(\bH_{IB}\bV))\leq \ln(tr(\bH_{IB}\tilde{\bV}))+tr(\bH_{IB}/tr(\bH_{IB}\tilde{\bV})(\bV-\tilde{\bV}))$ and $\ln(tr(\bH_{AE}\bV))\leq \ln(tr(\bH_{AE}\tilde{\bV}))+tr(\bH_{AE}/tr(\bH_{AE}\tilde{\bV})(\bV-\tilde{\bV}))$, from which $C(\bV)\leq\tilde{C}(\bV;\tilde{\bV})$ holds.

$\tilde{C}(\bV;\tilde{\bV})$ is a surrogate function since four key conditions holds \cite{Sun-17}: 1).   $C(\bV)\geq\tilde{C}(\bV;\tilde{\bV})$ holds; 2). $C(\tilde{\bV})=\tilde{C}(\tilde{\bV};\tilde{\bV})$; 3). $\nabla\tilde{C}(\bV;\tilde{\bV})|_{\bV=\tilde{\bV}}=\nabla C(\bV)|_{\bV=\tilde{\bV}}$; 4). $\tilde{C}(\bV;\tilde{\bV})$ are both continuous in $\bV$ and $\tilde{\bV}$.
\end{proof}

Therefore, after dropping the constant terms in $\tilde{C}(\bV;\tilde{\bV})$, $P8$ can be approximated as 
\bal
\notag
&(P9)\quad  \underset{\bV}{\max}\ \ln(tr(\bH_{AB}\bV))+\ln(tr(\bH_{IE}\bV))\\
\notag
&-tr((\bH_{IB}/tr(\bH_{IB}\tilde{\bV})+\bH_{AE}/tr(\bH_{AE}\tilde{\bV}))\bV)\ s.t.\ \eqref{eq4}, \eqref{eq5},
\eal
which is convex and can be directly solved via CVX. And  the penalty based algorithm shown above can be applied again to recover the rank-1 solution if $\bV$ is of not rank-1. Once $\bV$ is optimized, it is denoted as the new initial point $\tilde{\bV}$  and $\bV$ is optimized again in $P9$. According to the key property of MM,  the convergence is non-decreasing and a first-order optimal KKT point for $P8$ can be obtained. After obtaining $\bV$, the original precoding matrix can be reformulated as $\bQ=diag((\bu_{max}(\bV)\sqrt{\lambda_{max}(\bV)})[1:n])$. The SDR+MM algorithm  for optimizing $\bV$ is summarized in Algorithm 2, in which $C^{(k)}(\bV)$ denotes the value of objective function in $P8$ in the $k$-th iteration.

\begin{algorithm}[h]
	\caption{(\it SDR+MM based algorithm for optimizing $\bV$)}
	\begin{algorithmic}
    \Require  $\epsilon_3>0$, feasible point $\tilde{\bV}$.
        \State 1. Set $k=0$, compute $C^{(k)}(\bV)$.
\Repeat \  \ \ 
           \State 2. Optimize $\bV$ in $P9$ given $\tilde{\bV}$.
         \State 3. If $rank(\bV)>1$, recover the rank-1 solution via penalty based method.
         \State 4. Obtain the solution $\bV$, Compute $C^{(k+1)}(\bV)$.
\State 5. If $|C^{(k+1)}(\bV)-C^{(k)}(\bV)|\leq\epsilon_3$, stop algorithm, else set $k=k+1$, $\tilde{\bV}=\bV$, go back to step 2.
 \Until  $|C^{(k+1)}(\bV)-C^{(k)}(\bV)|\leq\epsilon_3$
\State 6.  Output $\bV$ as the first-order optimal KKT point of  $P8$.
	\end{algorithmic}
\end{algorithm}

\subsection{Convergence and complexity of the  algorithm}

Finally, since $\bw$ and $\bQ$ are both bounded by the constraints, according to the  Cauchy’s theorem \cite{Dong-20c}, the AO algorithm is guaranteed to have a monotonic convergence. As the convergence is reached, a KKT point solution of $\bw$ and $\bQ$ for the original problem $P1$ can be obtained. In the AO algorithm, the main computational complexity for optimizing $\bw$ given $\bQ$ and optimizing  $\bQ$ given $\bw$ are about $\mathcal{O}(n^{3})$ and $\mathcal{O}((n+1)^{3})$ respectively.

\section{Simulation Results}

To validate the SR performance achieved by the proposed algorithm with the active RIS design, numerical simulations have been carried out in this section.  Following \cite{Cui-19}\cite{Hong-20}, we consider that all the channels are formulated as the product of large scale fading and small scale fading. For the direct channels $\bh_{AB}$ and $\bh_{AE}$, the small scale fading is assumed to be Rayleigh fading due to extensive scatters. And the small scale fading in the reflecting channels $\bH_{AI}$, $\bh_{IB}$, $\bh_{IE}$ are assumed to be Rician fading. Taking $\bh_{AB}$ and $\bH_{AI}$ as examples, they are modeled as $\bh_{AB}=\bg_{AB}PL_{AB}$ and
\bal
\notag
\bH_{AI}=\left(\sqrt{\frac{\kappa}{\kappa+1}}\bar{\bH}_{AI}+\sqrt{\frac{1}{\kappa+1}}\tilde{\bH}_{AI}\right)PL_{AI},
\eal
where $\bg_{AB}$ and $\tilde{\bH}_{AI}$ are  Rayleigh fading channels, the entry of which is randomly generated  complex Gaussian random variables with zero mean, unit
variance, $PL_{AB}=\sqrt{\beta(d_0/d_{AB})^{\alpha_{AB}}}$ and $PL_{AI}=\sqrt{\beta(d_0/d_{AI})^{\alpha_{AI}}}$, $\beta=-30$dB is the pathloss at reference distance $d_0=$1m, $d_{AB}$ and $d_{AI}$ are the distance of Alice-Bob and Alice-RIS respectively, $\kappa=5$ is the Rician factor,  $\bar{\bH}_{AI}= \boldsymbol{\alpha}_I(\phi_{AI})\boldsymbol{\alpha}_A^H(\theta_{AI})$,
\bal
\notag
&\boldsymbol{\alpha}_I(\phi_{AI})=\begin{bmatrix}
1 & e^{j \frac{2\pi d_r cos(\phi_{AI})}{\lambda}} & ... & e^{j \frac{2(n-1)\pi d_r cos(\phi_{AI}) }{\lambda}}
\end{bmatrix}^T,\\
\notag
&\boldsymbol{\alpha}_A(\theta_{AI})=\begin{bmatrix}
1 & e^{j \frac{2\pi d_t cos(\theta_{AI})}{\lambda}} & ... & e^{j \frac{2(n-1)\pi d_t cos(\theta_{AI}) }{\lambda}}
\end{bmatrix}^T,
\eal
and where  $\theta_{AI}=arccos(x_I/d_{AI})$ and $\phi_{AI}=\pi-\theta_{AI}$ denote the 
angle of departure and the angle of arrival respectively, $x_I$ is the coordinate of RIS in the x-axis, $\lambda$  is the wavelength, $d_t$ and $d_r$ are the element intervals of the transmit and receive array. We set $d_t/\lambda=d_r/\lambda=0.5$, and let $\alpha_{AB}=3.8, \alpha_{AE}=3.5, \alpha_{AI}=\alpha_{IB}=\alpha_{IE}=2.2$ for each channel link, and the  noise power is $\sigma^2_B=\sigma^2_E=\sigma^2_I=-95$dBm.  For simplicity, we set that each amplification factor on the reflectin element is same, i.e., $\eta_i=\eta, \forall i$. For the location of each node, we consider a two dimensional coordinate space, and let Alice, Bob, RIS, Eve to be fixed in the coordinate $(0,0), (90,20),(40,40), (70,20)$ respectively (see Fig.1).
 In the penalty based algorithm, MM algorithm as well as AO algorithm, all the target accuracy is set as $10^{-3}$.  In the AO algorithm, the starting point is set as $\bw=\bo$, $\bQ=\bI$. All the results plotted in Fig.2 and Fig.3 are averaged over 100 channel realizations.

Fig.2 shows the SR performance versus $P_T$ achieved by the proposed algorithm with active RIS  as well as existing solution with passive RIS \cite{Cui-19} and optimal solution without RIS. Following \cite{Liang-21}, we set the amplification ability of each reflecting element in active RIS as  20, 30 or 40 in dB. Compared with the result by optimal solution without RIS, observe that due to ``double fading" effect, the SR achieved by passive RIS has only increased by about $3\%$, but the proposed AO algorithm with active RIS design achieves much better SR performance (see e.g., about $13\%$ performance gain when $\eta^2=$20dB, and about $48\%$ performance gain when $\eta^2$ increases to 40dB ). This indicates that active RIS design is effective on weakening the influence brought by ``double fading" effect in the reflecting channel link, and our proposed algorithm also guarantees a low information leakage at Eve, resulting in a higher SR than the existing solution with passive RIS. In fact, we remark that  although this great performance gain is at the cost of extra power consumption at RIS, only a very small fraction of the total reflect power $P_I$ is used for amplification. In other words,   in $P1$, the constraint $|\bQ\bH_{AI}\bw|^2+|\bQ|^2_F\sigma^2_I\leq P_I$ is always  inactive but $|\bQ[i,i]|\leq \eta_i$ is always active, which indicates that the setting of $\eta$ is the dominant parameter for boosting the SR. For example, based on our extensive tests, when $P_T$=40dBm, and $\eta^2$=20dB, the total power used for signal amplification at RIS is only about 4dBm, but a  $13\%$ performance gain can be achieved compared with passive RIS case. Hence, the proposed active RIS design also achieves a better energy efficiency than the passive RIS case.

\begin{figure}[t]
	\centerline{\includegraphics[width=3.0in]{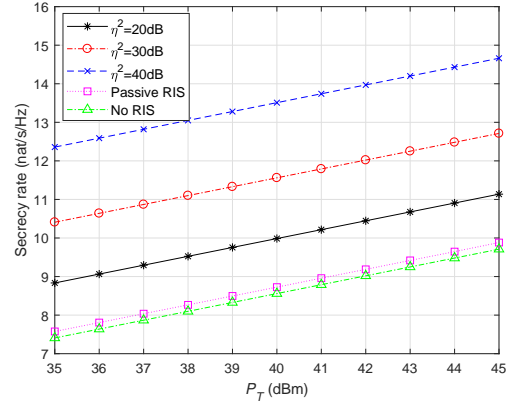}}
	\caption{SR performance versus $P_T$ under different solutions, $m=4, n=10$, $P_I=$40dBm.}
\end{figure}


Fig.3 illustrates the SR performance versus the number of reflecting elements $n$ under different solutions given fixed $P_T$ and $P_I$. It is noted that the SR returned by both active and passive RIS case increases with $n$ due to increased degree of freedom, but our proposed algorithm with active RIS design significantly performs better SR than the existing solutions with passive RIS and without RIS given fixed amplification factor. Furthermore, note that due to the ``double fading" effect, the SR with passive RIS only increases by about $15\%$ when $n$ varies from 10 to 60, which is less than $36\%$ achieved by active RIS when $\eta^2=$20dB. Even when $n=60$, the passive RIS only helps boosting the SR to about 8.6, which is close but still less than 9.0 achieved by active RIS when $\eta^2=$20dB and $n=10$. These results fully indicate that using active RIS can we save more number of reflecting elements to achieve a better performance gain compared with passive RIS case, thus greatly reducing the complexity of RIS.

\begin{figure}[t]
	\centerline{\includegraphics[width=3.0in]{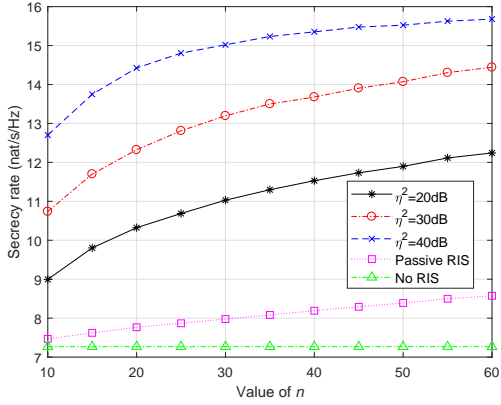}}
	\caption{SR performance versus $n$ achieved by different solutions, $m=4$, $P_T=40$dBm, $P_I=30$dBm.}
\end{figure}

Finally, Fig.4 gives the convergence of the proposed AO algorithm for optimizing $\bQ$ and $\bw$ as well as MM algorithm for
optimizing $\bQ$ given $\bw$ in one iteration of AO algorithm  based on different random channel
realizations. Note that for both AO and MM algorithms,
the convergence is monotonic under all settings of $m, n$. For
the MM  algorithm, it takes only 2 to 3 iterations to reach
target accuracy of $10^{-3}$. For AO algorithm, 4 to 8 iterations are required to converge. As the convergence is reached in both AO and MM algorithm, larger SR can be achieved given larger $m$ and $n$ due to the added  degree of freedom for transmission at Alice and reflecting elements with amplifiers  at RIS.

\begin{figure}[t]
	\centerline{\includegraphics[width=3.0in]{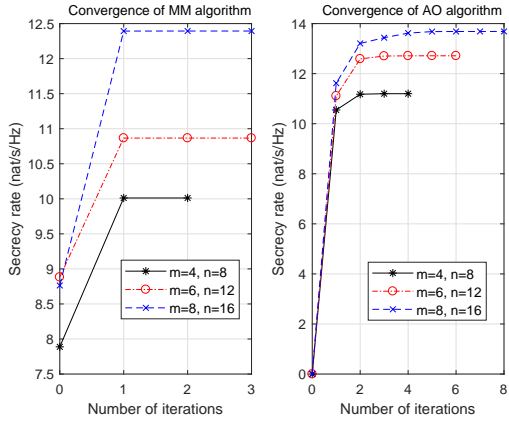}}
	\caption{Convergence of the AO and MM algorithms under different settings of $m$ and $n$, $P_T=P_I=30$dBm, $\eta^2=$30dB.}
\end{figure}

\section{Conclusion}

In this correspondence,  a novel active RIS-assisted secure wireless system is studied, in which the RIS helps enhancing the secure transmission by simultaneously adjusting the phase shift and amplitude of the signals. To solve the non-convex
SR optimization based on this design, an  AO algorithm is proposed to jointly optimize the beamformer at transmitter and reflecting coefficient matrix at RIS. Simulation results show that  the impact of ``double fading" effect can be effectively relieved with the aid of active RIS, the proposed algorithm greatly boosts the SR performance compared with existing algorithms with passive RIS and without RIS. 



\end{document}